\documentclass[journal,twocolumn]{IEEEtran}

\usepackage{amsmath,amsfonts,amssymb,amsthm,mathtools,bm}
\usepackage{cite}
\usepackage{graphicx}
\usepackage{xcolor}
\usepackage{booktabs}
\usepackage{algorithm}
\usepackage{algpseudocode}

\newtheorem{theorem}{Theorem}

\title{Learning Representative Port Subsets for Efficient Fluid Antenna Systems}
\author{Quan Zhou and Ming Zeng%
\thanks{
Q. Zhou is with the National University of Singapore (e-mail: zhouquan19971@gmail.com).
M. Zeng is with the Université Laval (corresponding author; e-mail: ming.zeng@gel.ulaval.ca).
}}

\begin{document}
\maketitle

\begin{abstract}
Fluid antenna systems (FASs) provide many candidate ports, but probing all ports for each channel realization incurs substantial online overhead. We study a complementary offline problem: compress the candidate-port space once and reuse the compressed set
across future channel realizations. The method collects optimal ports from sampled training realizations, requiring only one scan of the candidate ports per realization. 
On a DeepMIMO 7 × 7 FAS with 49 candidate ports, 120 training realizations yield 27 representative ports on average while retaining 99.995\% of the channel power gain achieved by exhaustive port selection.
\end{abstract}

\begin{IEEEkeywords}
Fluid antenna systems, port selection, spatial correlation, candidate-port compression, DeepMIMO
\end{IEEEkeywords}

\section{Introduction}

A fluid antenna system (FAS) is a wireless communication system in which an antenna can switch between multiple candidate ports within a compact physical space to select the position with the most favorable channel conditions~\cite{New2024ATO}.
However, selecting the best port from a large candidate set may require probing or estimating the channel quality at many ports, since the observation of each port incurs switching or signaling overhead~\cite{chai2022port}.

Existing approaches mainly reduce the cost of \emph{instantaneous} port
selection. For example, Chai \emph{et al.}~\cite{chai2022port} infer the optimal port from a limited number of observations, while other methods reduce the
effective selection cost through compact arrays, multi-port reception,
probabilistic selection, or channel prediction
\cite{wong2024compact,gonzalez2026slow,chen2025improved,han2026semi,xu2025toward}.
These methods reduce the number of channel observations needed to make a selection, but the underlying candidate-port space remains the full set of \(N\) physical ports for every new realization. 

We consider a complementary form of overhead reduction,
\emph{offline candidate-port compression}: we reduce the number of physical ports that need to be considered during online channel selection.
Suppose that a FAS is deployed in an environment whose future channel realizations follow a recurring distribution. Instead of repeatedly probing all $N$ ports, we ask whether a fixed subset of physical port positions can be learned once and reused across future channel realizations. 
This is distinct from constructing a shortlist separately for each
realization~\cite{waqar2025turbocharging}.

The key observation is that, although the optimal port depends on the channel realization, some physical ports can be optimal much more often than others. 
Moreover, because the FAS aperture is compact, spatially nearby ports tend to exhibit similar channel responses. Consequently, a port that is optimal for one channel realization can also provide near-optimal quality under other realizations whose optimal ports are nearby.
Together, these effects suggest that a small reusable set can perform well for future realizations without containing every exact optimum.

Motivated by this spatial structure, we learn a reusable subset by collecting the optimal port from each offline training realization. The resulting method requires only one scan of the candidate ports per training realization and admits a proxy-based approximation bound. This is a new FAS problem formulation and a communication-specific use of earlier work~\cite{zhou2025representative}, enabled by spatial structure of a compact FAS aperture.

The contributions are summarized as follows. 
\begin{itemize}
\item \textbf{Problem formulation}: motivated by the observation that nearby ports exhibit correlated channel qualities, we formulate a new problem of learning a reusable subset of candidate ports that preserves the performance of exhaustive port selection. 

\item \textbf{Method and theory}: we propose
a simple representative-port learning (RPL) method and establish an approximation
guarantee that quantifies the effects of proxy approximation and finite-sample coverage of proxy classes under the deployment distribution.
For fixed online observation budgets, we introduce two budgeted RPL variants.

\item \textcolor{black}{\textbf{Empirical validation}: on a DeepMIMO \cite{alkhateeb2019deepmimo} ray-tracing environment, 
RPL achieves near-exhaustive utility with a small subset, outperforms the mean-based baseline, and remains competitive with empirical greedy at substantially lower offline computational cost.
}
\end{itemize}

\section{Representative-Port Learning}
\label{sec:learning}

\textcolor{black}{
We consider a single-user single-input single-output (SISO) system with $N$ switchable candidate
ports deployed over a compact two-dimensional aperture with half-wavelength port spacing, $d=\lambda/2$.
For clarity, we develop the formulation for single-port selection; multiple-port activation can be handled analogously by treating each feasible set of simultaneously active ports as a single configuration.}
Key notation is summarized in Table~\ref{tab:notation}.

\begin{table}[t]
\centering
\caption{Key notation.}
\label{tab:notation}
\begin{tabular}{cc}
\hline
Symbol & Meaning \\
\hline
$N$ & Number of candidate ports in the FAS \\
$S$ & Number of offline training channel realizations \\
$M$ & Number of test channel realizations \\
$\mathcal A$ & \textcolor{black}{Generic subset of candidate ports} \\
$\mathcal A_S$ & \textcolor{black}{RPL subset learned from $S$ training realizations} \\
$R$ & Number of ports in the RPL subset, $R=|\mathcal A_S|$ \\
$K$ & Target online observation budget \\
\hline
\end{tabular}
\end{table}

\subsection{Reusable Port-Subset Formulation}
\label{sec:formulation}

Let $\mathcal P$ denote the distribution of channel realizations encountered during deployment. Let $\theta$ denote a channel
realization associated with a receiver location and its propagation
conditions, and assume
\[
\theta \sim \mathcal P.
\]
Let $[N]:=\{1,\ldots,N\}$ index the physical port positions within the FAS aperture,
so the same index refers to the same relative port position across channel realizations.

For a channel realization $\theta$, let $h_\theta(n)\in\mathbb C$ denote the
channel coefficient at port $n$, while $U_\theta(n)$ be the instantaneous
utility of activating port $n$. In the experiments we use $U_\theta(n):=|h_\theta(n)|^2$, which is proportional to the received \textcolor{black}{signal-to-noise ratio} under a fixed transmit power and noise level; the formulation also applies to other communication utilities. The exhaustive-selection optimum is
\begin{equation}
    U_\theta^\star
    :=
    \max_{n\in[N]} U_\theta(n).
    \label{eq:optimal-port}
\end{equation}

Instead of probing all $N$ ports online, we seek a fixed subset
$\mathcal A\subseteq[N]$ learned offline. For a new channel realization, online
selection is restricted to $\mathcal A$, giving the utility loss
\begin{equation}
    \operatorname{Regret}_{\mathcal A}(\theta)
    :=
    U_\theta^\star
    -
    \max_{m\in\mathcal A} U_\theta(m).
    \label{eq:instance-regret}
\end{equation}
Let $|\mathcal A|$ be the cardinality. The offline objective is therefore
\begin{equation}
    \min_{\mathcal A\subseteq[N]}
    \mathbb E_{\theta\sim\mathcal P}
    \left[
    \operatorname{Regret}_{\mathcal A}(\theta)
    \right],
    \qquad |\mathcal A|\ll N.
    \label{eq:subset-objective}
\end{equation}

\subsection{Algorithm and Online Use}

\textcolor{black}{To construct a solution to the subset-selection problem in \eqref{eq:subset-objective} we propose RPL.}
Given $S$ offline training channel realizations $\theta_1,\ldots,\theta_S\overset{\mathrm{i.i.d.}}{\sim}\mathcal P$, it finds the optimal port for each realization and takes the union of these ports:
\begin{equation}
    n^\star(\theta_i)
    :=
    \arg\max_{n\in[N]} U_{\theta_i}(n),
    \qquad
    \mathcal A_S
    :=
    \bigcup_{i=1}^{S}\{n^\star(\theta_i)\}.
    \label{eq:representative-set}
\end{equation}
The procedure is summarized in Algorithm~\ref{alg:representative-subset}.
RPL only requires the optimal port index for each training realization; the index may be obtained by exhaustive probing or by other port-identification methods.

\begin{algorithm}[t]
\caption{Representative-Port Learning (RPL)}
\label{alg:representative-subset}
\begin{algorithmic}[1]
\Require $S$ training channel realizations, candidate ports $[N]$
\Ensure RPL subset $\mathcal A_S$
\State $\mathcal A_S\gets\varnothing$
\For{$i=1,\ldots,S$}
    \State     
    Evaluate the utility at all ports for realization $\theta_i$
    \State $n^\star(\theta_i)\gets\arg\max_{n\in[N]}U_{\theta_i}(n)$
    \State $\mathcal A_S\gets\mathcal A_S\cup\{n^\star(\theta_i)\}$
\EndFor
\State \Return $\mathcal A_S$
\Statex 
\textcolor{black}{
\footnotesize\textit{Note:} Under $\theta_1,\ldots,\theta_S\overset{\mathrm{i.i.d.}}{\sim}\mathcal P$, both $\mathcal A_S$ and $R=|\mathcal A_S|$ are random.}
\end{algorithmic}
\end{algorithm}

After offline learning, when a new channel realization $\theta$ becomes
available, we evaluate the channel quality only at the ports in the learned subset $\mathcal A_S$ and select the port with the highest utility.
We count one channel-quality sounding per evaluated port. 
\begin{equation}
    \hat n(\theta)
    =
    \arg\max_{n\in\mathcal A_S} U_\theta(n).
    \label{eq:online-selection}
\end{equation}
Thus, instead of evaluating all $N$ candidate ports, online selection evaluates only $R:=|\mathcal A_S|$ ports. 
The same physical subset $\mathcal A_S$ is reused for all future channel realizations.

\subsection{Why Representative Ports Suffice}

The optimal port can change across channel realizations, but nearby ports in a densely spaced FAS tend to have similar channel qualities (due to spatial correlation) and can therefore serve as proxies.
We partition $[N]$ into disjoint proxy classes \(\mathcal C_1,\ldots,\mathcal C_J\), such that any two ports in the same class have utility values differing by at most \(\varepsilon\) for any channel realization possible under the deployment distribution $\mathcal{P}$.
Note that these proxy classes are introduced only for analysis.
\textcolor{black}{Assume also $U_\theta(n)\in[0,B]$ for some finite $B>0$.} Let
\begin{equation}
    q_j
    :=
    \Pr_{\theta\sim\mathcal P}
    \!\left(n^\star(\theta)\in\mathcal C_j\right)
    \label{eq:class-prob}
\end{equation}
denote the probability that the optimal port lies in $\mathcal C_j$.

\begin{theorem}[Representative-port approximation]
\label{thm:proxy-bound}
For an independent channel realization $\theta$, the subset $\mathcal A_S$
obtained by Algorithm~\ref{alg:representative-subset} satisfies
\begin{equation}
\begin{aligned}
\mathbb E_{\theta,\mathcal A_S}
\!\left[\operatorname{Regret}_{\mathcal A_S}(\theta)\right]
&\leq
\varepsilon+B\sum_{j=1}^J q_j e^{-Sq_j}.
\end{aligned}
\label{eq:proxy-bound}
\end{equation}
\end{theorem}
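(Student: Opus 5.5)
We split on whether the proxy class containing the new realization's optimal port has been hit during training. Fix the test realization $\theta$, independent of $\theta_1,\ldots,\theta_S$, and let $J(\theta)$ denote the index $j$ with $n^\star(\theta)\in\mathcal C_j$. Since the classes partition $[N]$, $J(\theta)$ is well defined. Define the good event
\[
G:=\{\mathcal A_S\cap\mathcal C_{J(\theta)}\neq\varnothing\}.
\]

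\emph{Step 1 (regret on $G$).} On $G$ there is some $m\in\mathcal A_S\cap\mathcal C_{J(\theta)}$. Because $m$ and $n^\star(\theta)$ lie in the same proxy class, the defining property of the classes gives $U_\theta(m)\ge U_\theta(n^\star(\theta))-\varepsilon=U_\theta^\star-\varepsilon$. Hence $\max_{m'\in\mathcal A_S}U_\theta(m')\ge U_\theta^\star-\varepsilon$, so $\operatorname{Regret}_{\mathcal A_S}(\theta)\le\varepsilon$. This uses that the $\varepsilon$-property holds for every realization in the support of $\mathcal P$, which applies to $\theta$ almost surely.

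\emph{Step 2 (regret on $G^c$).} Off $G$ we use the crude bound $\operatorname{Regret}_{\mathcal A_S}(\theta)\le U_\theta^\star\le B$. This holds because $U\in[0,B]$, and $\mathcal A_S$ is nonempty whenever $S\ge1$. Combining the two steps,
\[
\mathbb E[\operatorname{Regret}_{\mathcal A_S}(\theta)]\le\varepsilon\Pr(G)+B\Pr(G^c)\le\varepsilon+B\Pr(G^c).
\]

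\emph{Step 3 (bounding $\Pr(G^c)$).} Condition on $J(\theta)=j$, which occurs with probability $q_j$ by \eqref{eq:class-prob}. The training ports $n^\star(\theta_i)$ are i.i.d., each lying in $\mathcal C_j$ with probability $q_j$, and they are independent of $\theta$. A sufficient condition for $\mathcal A_S\cap\mathcal C_j\ne\varnothing$ is that some $n^\star(\theta_i)\in\mathcal C_j$. Therefore
\[
\Pr(G^c\mid J(\theta)=j)\le(1-q_j)^S\le e^{-Sq_j},
\]
using $1-x\le e^{-x}$. Averaging over $j$ gives $\Pr(G^c)\le\sum_j q_je^{-Sq_j}$, which yields \eqref{eq:proxy-bound}.

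The main point needing care is the independence in Step 3: $\theta$ must be independent of the training sample, so that conditioning on the class of $n^\star(\theta)$ does not change the distribution of the training optima. A second point is tie-breaking in the $\arg\max$. I would fix a deterministic tie-breaking rule so that $n^\star$ is a measurable function. Then $q_j$ is defined consistently for both the training optima and the test optimum, since both use the same rule.
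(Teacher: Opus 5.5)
Your proposal is correct and follows essentially the same argument as the paper: condition on the proxy class $\mathcal C_j$ containing the test optimum, bound the regret by $\varepsilon$ when some training optimum falls in $\mathcal C_j$ and by $B$ otherwise, and bound the failure probability by $q_j(1-q_j)^S\le q_j e^{-Sq_j}$. Your inequality $\Pr(G^c\mid J(\theta)=j)\le(1-q_j)^S$ is in fact an equality, since $\mathcal A_S$ is exactly the union of the training optima; your remarks on independence, nonemptiness of $\mathcal A_S$, and tie-breaking make explicit some points the paper leaves implicit.
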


Here, the expectation over $\mathcal A_S$ is taken because the learned subset
depends on the random offline training realizations.

\begin{proof}
Condition on $n^\star(\theta)\in\mathcal C_j$. 
If at least one training optimum lies in \(C_j\), then some \(a\in\mathcal A_S\cap C_j\) satisfies \(U_\theta^\star-U_\theta(a)\le\varepsilon\) by the proxy-class property;
otherwise the regret is at most $B$. The probability of the latter event is
$q_j(1-q_j)^S$. Summing over $j$, and using $(1-q_j)^S\leq e^{-Sq_j}$ give the bound.
\end{proof}
The theorem shows why exact optimal-port recovery is unnecessary: it is sufficient to recover any representative from the optimal port's proxy class, which incurs at most \(\varepsilon\) utility loss, while the probability of missing a class with probability \(q_j\) decays exponentially with \(S\).

\subsection{Computational Complexity}
\label{sec:complexity}

RPL is deliberately lightweight: it requires neither model fitting nor iterative subset optimization, and once the optimal port is identified for each offline realization, only its index needs to be retained. Under exhaustive optimal-port identification, this entails \(O(SN)\) offline utility evaluations, while this cost can be further reduced by combining RPL with alternative low-overhead identification mechanisms.
After deployment, RPL reduces the number of online port evaluations from \(O(N)\) to \(O(R)\). \textcolor{black}{The corresponding costs of the baselines are summarized in Table~\ref{tab:complexity}.}

\begin{table}[t]
  \centering
  \caption{Offline complexity and retained online candidate size for
  the subset-selection methods. Here $R=|\mathcal A_S|$ \textcolor{black}{denotes the (random) size of RPL subset}, while $K$ is the target online budget.}
  \label{tab:complexity}
  \begin{tabular}{lcc}
    \toprule
    Method & Offline cost & Online candidates \\
    \midrule
    RPL & $O(SN)$ & $R$ \\
    G-BRPL & $O(SN+SRK)^{*}$ & $\min\{R,K\}$ \\
    F-BRPL & $O(SN)$ & $\min\{R,K\}$ \\
    Top-$K$ Mean Utility & $O(SN)$ & $K$ \\
    Empirical Greedy & $O(SNK)$ & $K$ \\
    Uniform & $O(NK)$ & $K$ \\
    \bottomrule
\multicolumn{3}{l}{\footnotesize
* When $K<R$; otherwise, G-BRPL reduces to RPL.}
  \end{tabular}
\end{table}

\section{Numerical Results}
\label{sec:evaluation}

\textcolor{black}{In this section,} we evaluate whether representative-port learning can compress the online candidate-port space while preserving exhaustive port-selection utility.
All experiments use the same DeepMIMO~\cite{alkhateeb2019deepmimo} ray-tracing environment and a compact \(7\times7\) FAS with $N=49$ switchable ports.

\subsection{Experimental Setup}

Each channel realization \(\theta\) is constructed from the propagation environment at a reference receiver location \(p_\theta\).
The local $7\times7$ FAS is centered at $p_\theta$, and the channel coefficients at its candidate ports are synthesized from the dominant propagation paths at this reference location.

\paragraph{Receiver-location distribution}
Each channel realization $\theta$ is associated with a reference receiver location $p_\theta$. We define the deployment distribution by assigning
sampling probabilities to the valid DeepMIMO receiver locations. 
\textcolor{black}{Specifically,
we evaluate a four-hotspot Gaussian-mixture density at each receiver location
to model a spatially nonuniform deployment, motivated by clustered user
distributions in wireless networks~\cite{de2024nonuniform}.}
The probability of a location $\mathbf z=(z_x,z_y)$ is proportional to
\[
    f(\mathbf z)\propto
    \sum_{r=1}^{4}w_r
    \exp\!\left[
        -\frac12\left(
        \frac{(z_x-x_r)^2}{\sigma_{x,r}^2}
        +
        \frac{(z_y-y_r)^2}{\sigma_{y,r}^2}
        \right)
    \right],
\]
where the four hotspots are characterized by their centers $(x_r,y_r)$, spatial spreads $(\sigma_{x,r},\sigma_{y,r})$ along the two coordinate axes, and relative sampling weights $w_r$. 
For $r=[4]$, the corresponding parameter values are
$(0.20,0.22,0.025,0.10,0.25)$,
$(0.72,0.72,0.025,0.10,0.20)$,
$(0.14,0.48,0.020,0.08,0.20)$, and
$(0.46,0.60,0.020,0.08,0.20)$.

\paragraph{Channel and port model}
We use the DeepMIMO v4 \texttt{O1\_28} scenario with transmitter set~5
(TX index~0) and receiver set~2 (UG3). We use the channel on subcarrier~0
for a SISO link and do not model Doppler-induced time variation.
Among the valid receiver locations, $487{,}711$ reference points are available for sampling.

For each reference receiver location $p_{\theta}$, we synthesize the channel across the $7\times 7$ FAS using the strongest $L=20$ propagation paths. 
\textcolor{black}{Under the far-field plane-wave approximation~\cite{ramirez2024spatial}, the channel at port $n$ is modeled as}
\begin{equation}
    h_\theta(n)
    =
    \sum_{\ell=1}^{L}
    a_{\theta,\ell}
    \exp\!\left(
        -\mathrm{i}\frac{2\pi}{\lambda}
        \mathbf u_{\theta,\ell}^{\mathsf T}\Delta\mathbf r_n
    \right),
    \qquad n\in[N],
    \label{eq:deepmimo-channel}
\end{equation}
where $a_{\theta,\ell}\in\mathbb C$ is the complex gain of path $\ell$
at the reference location, $\mathbf u_{\theta,\ell}\in\mathbb R^2$ is its
in-plane propagation direction, and
$\Delta\mathbf r_n\in\mathbb R^2$ is the physical offset of port $n$
from $p_\theta$. 
The $49$ port form a centered square grid with spacing $d=\lambda/2$:
\begin{equation*}
    \Delta\mathbf r_n
    =
    \begin{bmatrix}
        (p-4)d\\
        (q-4)d
    \end{bmatrix},
    \quad
    n=7(p-1)+q,\quad p,q\in[7].
    \label{eq:fas-grid}
\end{equation*}
\textcolor{black}{This model implies spatially smooth channel variation; the corresponding Lipschitz bound is given in the Appendix~\ref{sec:appendix}.}

\subsection{Evaluation Metrics and Baselines}

All subset-selection methods are evaluated on the same
$M=1000$ i.i.d. test channel realizations
$\theta_1,\ldots,\theta_M\sim\mathcal P$.

\paragraph{Metrics}
Given $M$ i.i.d. test channel realizations
$\theta_1,\ldots,\theta_M\sim\mathcal P$, our primary metric for a subset $\mathcal A\subseteq[N]$ is the relative utility loss
\begin{equation}
    \mathcal L_{\rm rel}(\mathcal A)
    =
    1-
    \frac{1}{M}
    \sum_{j=1}^M
    \frac{\max_{n\in\mathcal A}U_{\theta_j}(n)}{\max_{n\in[N]}U_{\theta_j}(n)}.
    \label{eq:relative-loss}
\end{equation}
We also report near-optimal coverage at threshold $\tau>0$
\begin{equation}
    \mathcal C_\tau(\mathcal A)
    =
    \frac{1}{M}\sum_{j=1}^{M}
    \mathbf 1\!\left\{
        \max_{n\in\mathcal A}U_{\theta_j}(n)\ge \tau \max_{n\in[N]}U_{\theta_j}(n)
    \right\}    .
    \label{eq:near-optimal-coverage}
\end{equation}
The online probing fraction is $K/N$.

\paragraph{Budgeted RPL variants}
\textcolor{black}{Since the learned subset \(\mathcal{A}_S\) is random, its cardinality \(R=|\mathcal{A}_S|\) is also random. Thus, for a prescribed online observation budget \(K\), the event \(K<R\) may occur. In this case, we consider two budgeted RPL variants that further trim \(\mathcal{A}_S\) to \(K\) ports.} No port outside \(A_S\) is introduced.

\textit{Frequency-based Budgeted RPL (F-BRPL)} selects the \(K\) ports in \(\mathcal A_S\) with the largest counts of being the exhaustive-optimal port across the training realizations. 

\textit{Greedy Budgeted RPL (G-BRPL)} starts with an empty subset and iteratively adds the port in \(\mathcal A_S\) that yields the largest marginal increase in the empirical objective 
\begin{equation}
\widehat U_S(\mathcal A)
=
\frac{1}{S}\sum_{i=1}^S
\max_{n\in\mathcal A} U_{\theta_i}(n),\quad \widehat U_S(\emptyset):=0.
\label{equ:empirical-objective}
\end{equation}

\paragraph{Reusable subset-selection baselines}
We compare RPL with three baselines that construct a fixed $K$-port subset from the same training set.

\textit{Uniform} uses deterministic farthest-point sampling on the
physical port coordinates: starting from a fixed initial port, it iteratively
selects the port that is farthest from the already selected set,
thereby providing a spatially distributed subset.

\textit{Top-$K$ Mean Utility} selects the $K$ ports with the largest
average channel utility over the available training realizations.

\textit{Empirical Greedy} starts with an empty port subset and
iteratively adds one port at a time. At each step, for each remaining
candidate port, it computes the average increase in empirical objective $\widehat U_S(\mathcal A)$ over the training set when that port is added to the current subset, and selects the port with the largest increase. The process continues until $K$ ports are selected. 
This is the standard greedy strategy for cardinality-constrained submodular maximization \cite{nemhauser1978analysis}.

\paragraph{Limited-observation baseline}
We additionally compare against a Chai-style limited-observation LSTM \cite{chai2022port}. This baseline probes exactly $R$ approximately equispaced ports using the same farthest-point rule as Uniform, predicts the full $N$-dimensional utility vector, and then selects a port using the predicted utilities. Thus both methods use the same number of observed ports, but RPL permanently compresses the candidate space whereas the LSTM performs inference over all $N$ ports.

\paragraph{Computational comparison}
Table~\ref{tab:complexity} summarizes the computational costs. RPL, F-BRPL, and Top-\(K\) Mean Utility require \(O(SN)\) offline operations, compared with \(O(SNK)\) for Empirical Greedy and \(O(NK)\) for Uniform. G-BRPL has the same \(O(SN)\) complexity as RPL when \(R\le K\), but requires \(O(SN+SRK)\) when \(R>K\) due to the extra trimming step. Online, RPL evaluates \(R\) ports, whereas the Chai-style LSTM requires model training and port inference.

\subsection{Utility--Overhead Tradeoff}

We compare the fixed-budget RPL variants and other subset-selection methods on a shared test set of $M=1000$ realizations, using $20$ repetitions with fresh training sets of $S=200$ realizations and budgets $K\in\{5,10,15,20,25,27,28,30\}$. One repetition yielded $R=26$; we exclude it for $K\in\{27,28,30\}$ to ensure a fair comparison at the same budget.

Fig.~\ref{fig:utility-overhead} summarizes the fixed-budget results over $K$.
The upper plot shows that the relative utility loss decreases for all methods as the online budget \(K\) increases. 
\textcolor{black}{G-BRPL closely tracks Empirical Greedy while avoiding the latter's \(O(SNK)\) iterative subset search.
Uniform is slightly better at the smallest budgets, but both budgeted RPL variants become increasingly competitive as the budget grows. F-BRPL incurs slightly higher utility loss than G-BRPL, but offers a modest computational advantage. Top-(K) Mean Utility consistently incurs substantially higher loss.}

The lower plot provides a complementary view of how frequently the selected subset achieves at least \(99.5\%\) of the exhaustive-selection utility.
G-BRPL coverage increases from $0.956$ to $1.0$ as $K$ increases. 
Empirical Greedy closely tracks RPL, whereas Top-K Mean Utility remains around
$0.877$--$0.899$ even at $K=25$.

\begin{figure}[t]
    \centering
    \includegraphics[width=0.8\columnwidth]{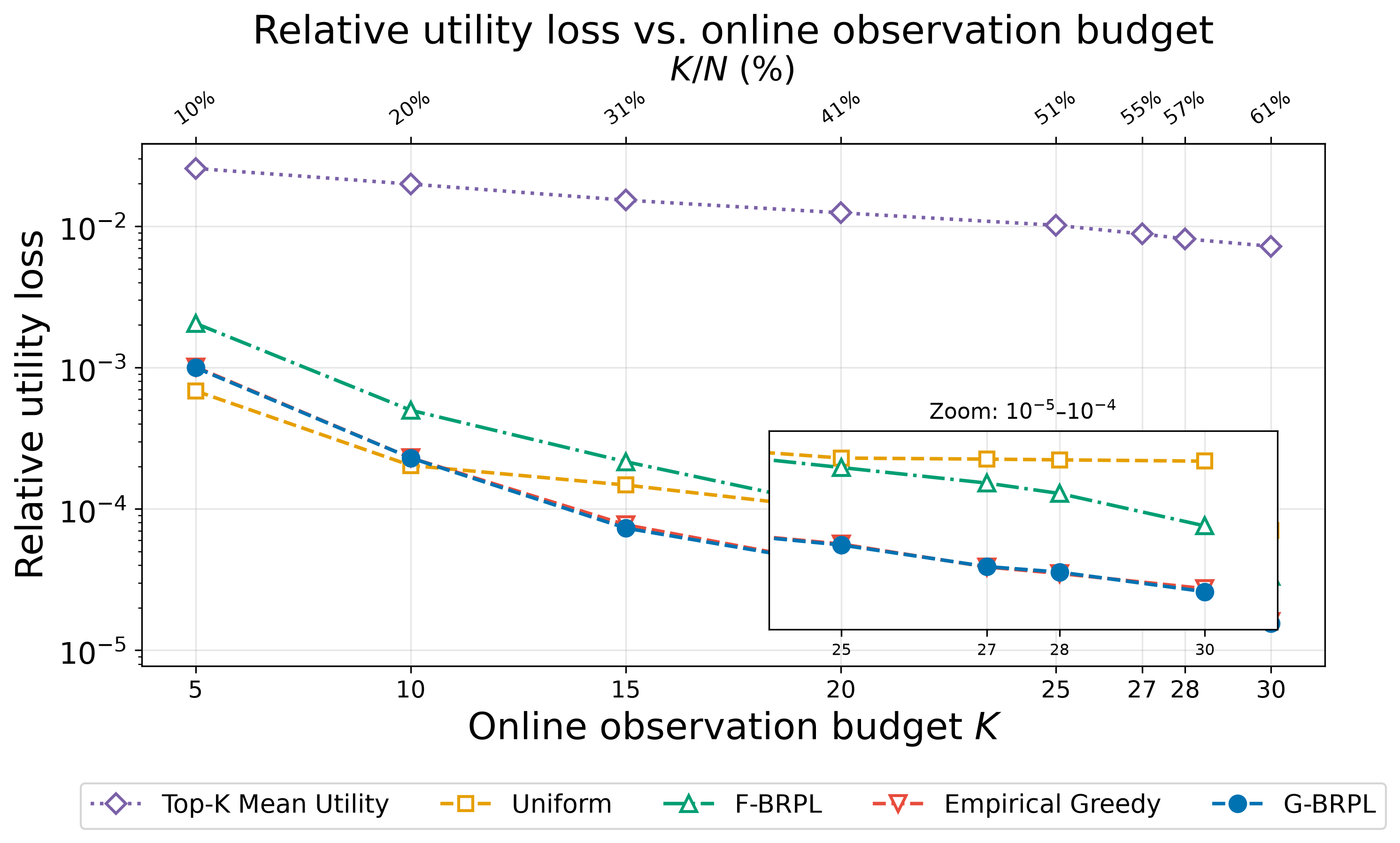}
    
    \vspace{2pt}
    
    \includegraphics[width=0.8\columnwidth]{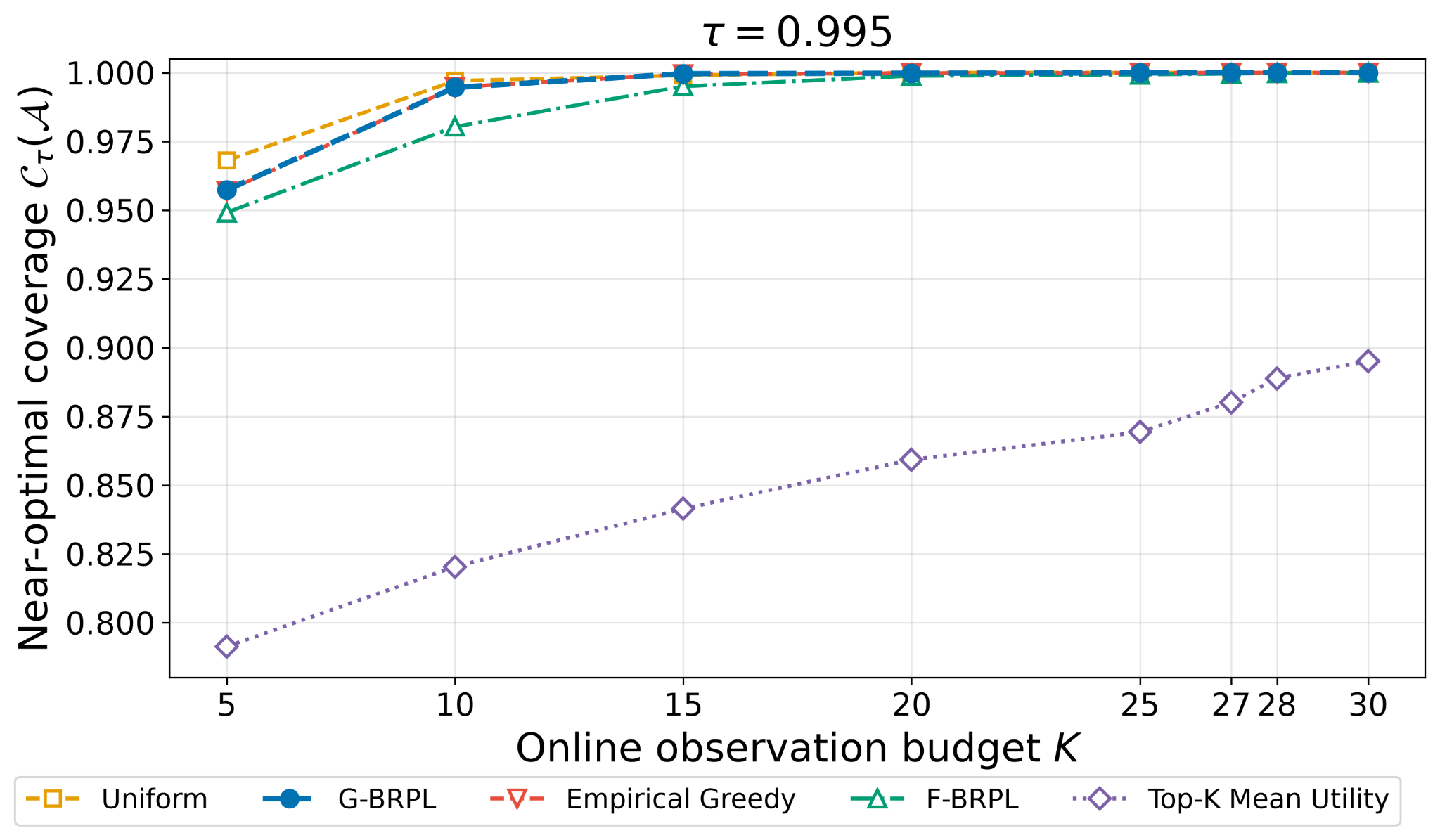}
    \caption{Fixed-budget utility--overhead tradeoff: relative utility loss (upper) and near-optimal coverage (lower)
    versus the online budget $K$.}
    \label{fig:utility-overhead}
\end{figure}

\subsection{Limited-Observation Comparison}

We next compare RPL with a Chai-style limited-observation LSTM.
For each $S\in\{40,80,120,160,200,400,600\}$, we perform $10$ repetitions using the same $S$ training realizations for both methods. RPL uses all $S$ realizations to construct its representative subset,
while the LSTM uses $80\%$ for model fitting and $20\%$ for validation, with a randomized split.
The LSTM uses a one-layer architecture with $32$ hidden units and is trained using MSE and Adam (learning rate \(10^{-3}\), batch size $32$, maximum $500$ epochs, and early-stopping patience $50$).

At test time, the RPL subset cardinality $R$ determines the common observation budget $K$. RPL selects the port with the highest observed utility among its $R$ representative ports.
In contrast, the LSTM uses the $R$ observed approximately equispaced ports to predict the utilities of all $N$ ports and selects the port with the highest predicted utility. 

The two methods use the same number of online observations but serve
different purposes: RPL compresses the candidate space, whereas the LSTM
infers the remaining utilities for each realization. Thus, we compare
them at a common online observation budget rather than an identical
offline training protocol.

\begin{figure}[t]
  \centering
  \includegraphics[width=0.75\columnwidth]{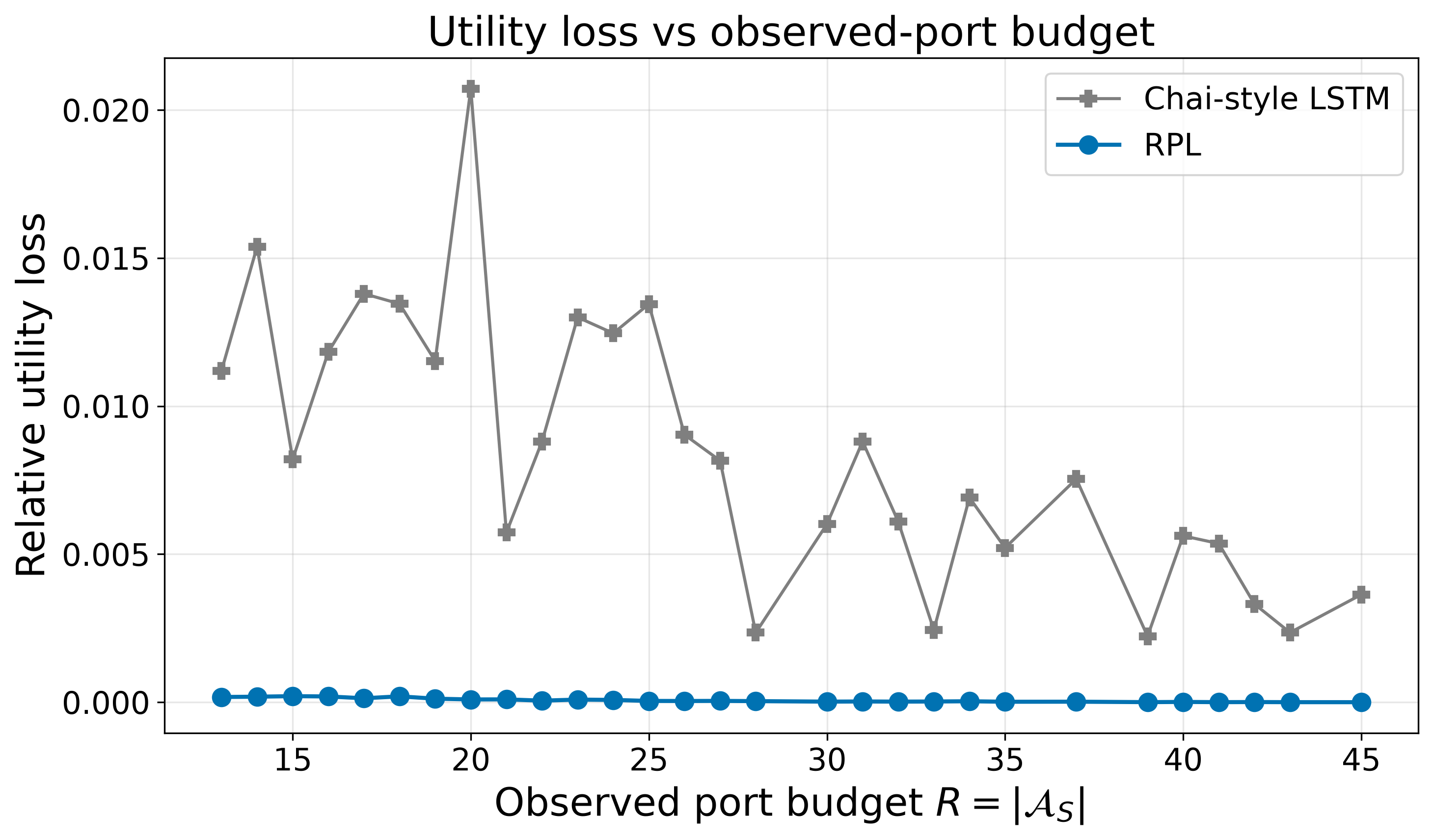}
  \caption{Relative utility loss versus the common observed-port budget $R$
  for RPL and the Chai-style limited-observation LSTM. Each method uses the same number of observed ports.}
  \label{fig:lstm}
\end{figure}

\subsection{Offline Sample Efficiency}

We next study the offline sample efficiency of RPL by varying
$S\in\{40,80,120,160,200,300,400,600\}$ for $20$ repetitions. Each repetition samples $S$
realizations without replacement from the $600$-realization training pool.
The learned subset size $R$ is then used for all subset-selection baselines (set $K=R$) and evaluated on the same $M=1000$ test set.
\textcolor{black}{Under this setup, both F-BRPL and G-BRPL produce the same subset as RPL. We therefore omit them from this experiment.}

Table~\ref{tab:ssweep} and Fig.~\ref{fig:ssweep} summarize the results.
As \(S\) increases, RPL progressively enlarges its learned subset and its relative utility loss decreases.
In particular, at $S=120$, RPL retains $27.1\pm2.9$ ports on average and incurs a relative utility loss of only $4.64\times10^{-5}$.
Empirical Greedy still closely tracks RPL, while Top-K Mean Utility consistently incurs substantially larger loss. Uniform is competitive at small \(S\), but the performance becomes less favorable as the available training data increase.

\begin{table}[t]
\centering
\caption{Offline sample-efficiency sweep: mean learned subset size and
relative utility loss (mean $\pm$ sample std over $20$ repetitions).}
\label{tab:ssweep}
\resizebox{\columnwidth}{!}{%
\begin{tabular}{c c c c c c}
\toprule
$S$ & $R=|\mathcal A_S|$ & RPL & Uniform & Top-K Mean Utility & Empirical Greedy\\
\midrule
40  & $15.20\!\pm\!2.17$ & $(3.07\!\pm\!3.85)\!\times\!10^{-4}$ & $(1.54\!\pm\!0.21)\!\times\!10^{-4}$ & $(1.61\!\pm\!0.37)\!\times\!10^{-2}$ & $(4.57\!\pm\!8.36)\!\times\!10^{-4}$\\
80  & $22.65\!\pm\!2.06$ & $(4.26\!\pm\!15.73)\!\times\!10^{-4}$ & $(9.50\!\pm\!1.34)\!\times\!10^{-5}$ & $(1.13\!\pm\!0.29)\!\times\!10^{-2}$ & $(4.36\!\pm\!15.73)\!\times\!10^{-4}$\\
120 & $27.10\!\pm\!2.86$ & $(4.64\!\pm\!2.06)\!\times\!10^{-5}$ & $(7.43\!\pm\!1.30)\!\times\!10^{-5}$ & $(8.77\!\pm\!1.74)\!\times\!10^{-3}$ & $(4.88\!\pm\!3.03)\!\times\!10^{-5}$\\
160 & $31.40\!\pm\!2.21$ & $(2.62\!\pm\!1.29)\!\times\!10^{-5}$ & $(5.87\!\pm\!1.22)\!\times\!10^{-5}$ & $(6.74\!\pm\!1.20)\!\times\!10^{-3}$ & $(2.38\!\pm\!0.97)\!\times\!10^{-5}$\\
200 & $32.80\!\pm\!2.63$ & $(2.35\!\pm\!1.41)\!\times\!10^{-5}$ & $(5.44\!\pm\!1.19)\!\times\!10^{-5}$ & $(6.08\!\pm\!1.02)\!\times\!10^{-3}$ & $(2.30\!\pm\!1.50)\!\times\!10^{-5}$\\
300 & $38.45\!\pm\!1.99$ & $(5.35\!\pm\!6.24)\!\times\!10^{-6}$ & $(3.83\!\pm\!0.28)\!\times\!10^{-5}$ & $(3.93\!\pm\!0.70)\!\times\!10^{-3}$ & $(3.52\!\pm\!3.01)\!\times\!10^{-6}$\\
400 & $41.15\!\pm\!1.50$ & $(2.42\!\pm\!1.86)\!\times\!10^{-6}$ & $(3.40\!\pm\!0.71)\!\times\!10^{-5}$ & $(3.10\!\pm\!0.51)\!\times\!10^{-3}$ & $(2.12\!\pm\!1.64)\!\times\!10^{-6}$\\
600 & $45.00\!\pm\!0.00$ & $(4.77\!\pm\!0.00)\!\times\!10^{-7}$ & $(1.58\!\pm\!0.00)\!\times\!10^{-5}$ & $(1.79\!\pm\!0.00)\!\times\!10^{-3}$ & $(4.77\!\pm\!0.00)\!\times\!10^{-7}$\\
\bottomrule
\end{tabular}}
\end{table}

\begin{figure}[t]
  \centering
  \includegraphics[width=\columnwidth]{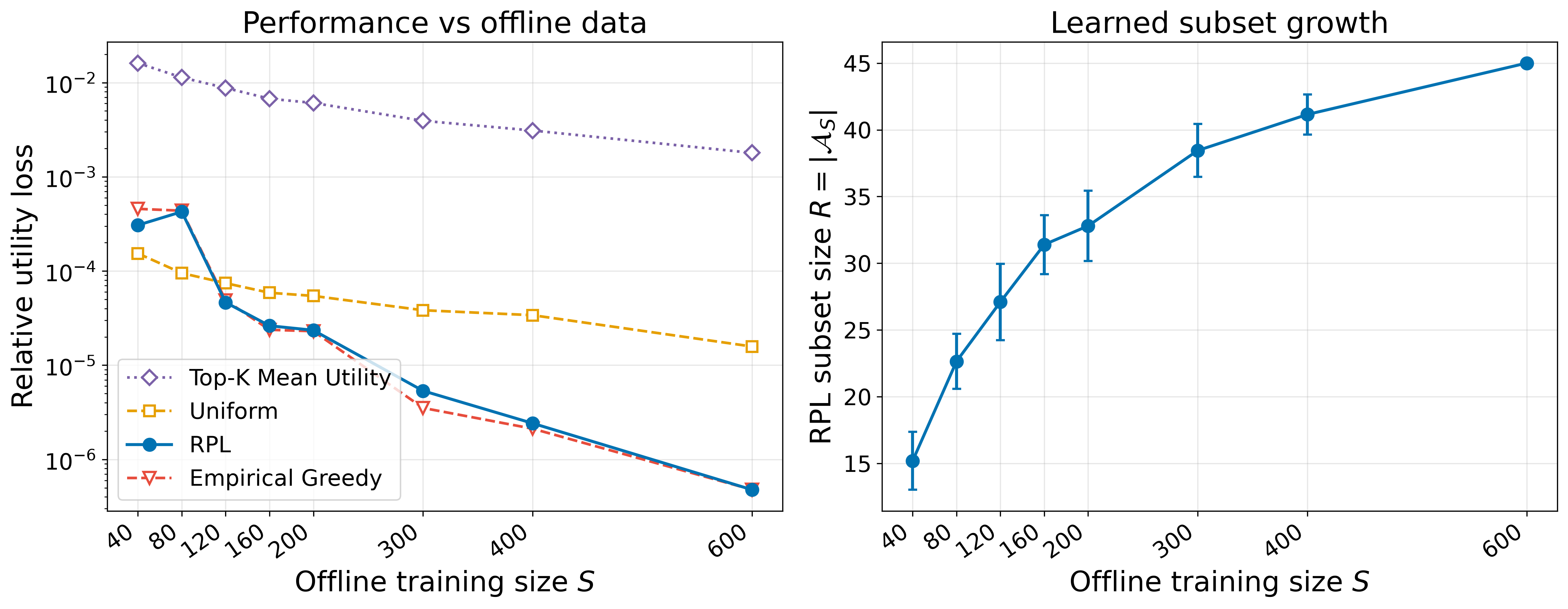}
\caption{Offline sample efficiency under the deployment distribution.
Left: relative utility loss versus training size $S$, with all methods using an online observation budget $K=R$; right: RPL subset cardinality $R$ versus $S$, with mean $\pm$ one standard deviation over $20$ repetitions.}
  \label{fig:ssweep}
\end{figure}

\section{Conclusion}

We introduced a new approach to reducing port-selection overhead in FASs by learning a reusable subset of representative physical ports offline and restricting subsequent online selection to this subset. The approach exploits a basic property of compact FAS
apertures: spatially nearby ports tend to exhibit similar channel
utilities, so a representative port can provide near-optimal utility
even when it is not the exact exhaustive optimum. 
This structure leads
to a simple learning procedure that requires only the index of the
optimal port from each training channel realization, with no iterative subset
optimization or model fitting. 

Experiments on a DeepMIMO-based FAS demonstrate that RPL substantially reduces the candidate-port space with negligible utility loss. Under online observation budgets, G-BRPL achieves performance comparable to empirical greedy at substantially lower computational cost, while F-BRPL is computationally cheaper but incurs slightly higher utility loss.

RPL can be combined with methods that infer the best port from limited channel observations to further reduce online probing. 
A natural extension is to learn contextual representative subsets that adapt to changing conditions.

\appendix

\subsection{Spatial Smoothness and Proxy Property}
\label{sec:appendix}
Under the far-field model, the channel construction also gives the spatial proxy property used in Theorem~\ref{thm:proxy-bound}. 
Let
$A_\theta:=\sum_{\ell=1}^L|a_{\theta,\ell}|\le\bar A$ \textcolor{black}{for some finite $\bar A>0$} and assume
$\|\mathbf u_{\theta,\ell}\|_2=1$. Since
$|e^{\mathrm{i}x}-e^{\mathrm{i}y}|\le |x-y|$,
\[
|h_\theta(n)-h_\theta(m)|
\le
\frac{2\pi A_\theta}{\lambda}
\|\Delta\mathbf r_n-\Delta\mathbf r_m\|_2.
\]
Moreover, $|h_\theta(n)|\le A_\theta$, and hence
\[
|U_\theta(n)-U_\theta(m)|
\le
\frac{4\pi A_\theta^2}{\lambda}
\|\Delta\mathbf r_n-\Delta\mathbf r_m\|_2.
\]
Therefore, if $A_\theta\le\bar A$ over the considered channel class,
\begin{equation*}
|U_\theta(n)-U_\theta(m)|
\le
\frac{4\pi\bar A^2}{\lambda}
\|\Delta\mathbf r_n-\Delta\mathbf r_m\|_2.
\end{equation*}

Thus, nearby physical ports can serve as low-loss proxies even when they are not exact exhaustive optima.
In particular, if every proxy class has physical diameter at most \(\delta\), the Lipschitz bound above gives \(\epsilon \le (4\pi\bar A^2/\lambda)\delta\).

\bibliographystyle{IEEEtran}
\bibliography{ref}

@article{New2024ATO,
  title={A Tutorial on Fluid Antenna System for 6G Networks: Encompassing Communication Theory, Optimization Methods and Hardware Designs},
  author={Wee Kiat New and Kai-Kit Wong and Hao Xu and Chao Wang and Farshad Rostami Ghadi and Jichen Zhang and Junhui Rao and Ross D. Murch and Pablo Ram{\'i}rez-Espinosa and David Morales-Jim{\'e}nez and Chan-Byoung Chae and Kin-Fai Tong},
  journal={IEEE Communications Surveys \& Tutorials},
  year={2024},
  volume={27},
  pages={2325-2377},
  url={https://api.semanticscholar.org/CorpusID:271039289}
}

@article{nemhauser1978analysis,
  title={An analysis of approximations for maximizing submodular set functions—I},
  author={Nemhauser, George L and Wolsey, Laurence A and Fisher, Marshall L},
  journal={Mathematical programming},
  volume={14},
  number={1},
  pages={265--294},
  year={1978},
  publisher={Springer}
}

@ARTICLE{gonzalez2026slow,
  author={González-Coma, José P. and López-Martínez, F. Javier},
  journal={IEEE Wireless Communications Letters}, 
  title={Slow Fluid Antenna Multiple Access With Multiport Receivers}, 
  year={2026},
  volume={15},
  number={},
  pages={1280-1284},
  doi={10.1109/LWC.2026.3651420}
}

@article{waqar2025turbocharging,
  title={Turbocharging fluid antenna multiple access},
  author={Waqar, Noor and Wong, Kai-Kit and Chae, Chan-Byoung and Murch, Ross},
  journal={IEEE Transactions on Wireless Communications},
  year={2025},
  publisher={IEEE}
}

@ARTICLE{wong2024compact,
  author={Wong, Kai-Kit and Chae, Chan-Byoung and Tong, Kin-Fai},
  journal={IEEE Transactions on Wireless Communications}, 
  title={Compact Ultra Massive Antenna Array: A Simple Open-Loop Massive Connectivity Scheme}, 
  year={2024},
  volume={23},
  number={6},
  pages={6279-6294},
  doi={10.1109/TWC.2023.3330932}}

@article{chen2025improved,
  title={Improved joint transmit and receive port selection for capacity maximization in fluid-MIMO systems},
  author={Chen, Jung-Chieh and Cheng, Tzu-Lu and Wong, Kai-Kit and Shin, Hyundong},
  journal={IEEE Wireless Communications Letters},
  volume={14},
  number={6},
  pages={1693--1697},
  year={2025},
  publisher={IEEE}
}

@ARTICLE{chai2022port,
  author={Chai, Zhi and Wong, Kai-Kit and Tong, Kin-Fai and Chen, Yu and Zhang, Yangyang},
  journal={IEEE Communications Letters}, 
  title={Port Selection for Fluid Antenna Systems}, 
  year={2022},
  volume={26},
  number={5},
  pages={1180-1184},
  doi={10.1109/LCOMM.2022.3152451}
}

@article{xu2025toward,
  title={Toward practical fluid antenna systems: Co-optimizing hardware and software for port selection and beamforming},
  author={Xu, Sai and Wong, Kai-Kit and Du, Yanan and Hong, Hanjiang and Chae, Chan-Byoung and Liu, Baiyang and Tong, Kin-Fai},
  journal={IEEE Transactions on Wireless Communications},
  year={2025},
  publisher={IEEE}
}

@article{han2026semi,
author = {Han, Tianyu and Zhu, Yongxu and Wong, Kai-Kit and Zheng, Gan and Chae, Chan-Byoung and You, Xiaohu},
year = {2026},
month = {01},
pages = {21625-21638},
title = {Partial Fluid Antenna System: Port Selection via Statistical Analysis},
volume = {25},
journal = {IEEE Transactions on Wireless Communications},
doi = {10.1109/TWC.2026.3722171}
}

@misc{zhou2025representative,
      title={Representative Action Selection for Large Action Space Bandit Families}, 
      author={Quan Zhou and Mark Kozdoba and Shie Mannor},
      year={2026},
      eprint={2505.18269},
      archivePrefix={arXiv},
      primaryClass={cs.LG},
      url={https://arxiv.org/abs/2505.18269}, 
}

@misc{alkhateeb2019deepmimo,
      title={DeepMIMO: A Generic Deep Learning Dataset for Millimeter Wave and Massive MIMO Applications}, 
      author={Ahmed Alkhateeb},
      year={2019},
      eprint={1902.06435},
      archivePrefix={arXiv},
      primaryClass={cs.IT},
      url={https://arxiv.org/abs/1902.06435}, 
}

@INPROCEEDINGS{de2024nonuniform,
  author={Filippo, Bruno De and Ahmad, Bilal and Riviello, Daniel Gaetano and Guidotti, Alessandro and Vanelli-Coralli, Alessandro},
  booktitle={2024 IEEE International Mediterranean Conference on Communications and Networking (MeditCom)}, 
  title={Non-Uniform User Distribution in Non-Terrestrial Networks with Application to User Scheduling}, 
  year={2024},
  volume={},
  number={},
  pages={441-446},
  doi={10.1109/MeditCom61057.2024.10621409}}

@ARTICLE{ramirez2024spatial,
  author={Ramírez-Espinosa, Pablo and Morales-Jimenez, David and Wong, Kai-Kit},
  journal={IEEE Transactions on Wireless Communications}, 
  title={A New Spatial Block-Correlation Model for Fluid Antenna Systems}, 
  year={2024},
  volume={23},
  number={11},
  pages={15829-15843},
  doi={10.1109/TWC.2024.3434509}}

\end{document}